\newcommand{\ba}{\begin{array}}
\newcommand{\ea}{\end{array}}
\newcommand{\bena}{\begin{eqnarray}}
\newcommand{\eena}{\end{eqnarray}}
\newcommand{\ec}{\end{center}}
\newcommand{\ei}{\end{itemize}}
\newcommand{\benu}{\begin{enumerate}}
\newcommand{\eenu}{\end{enumerate}}
\newcommand{\bdes}{\begin{description}}
\newcommand{\edes}{\end{description}}
\newcommand{\et}{\end{tabular}}
\newcommand \deltabf{\mbox{\boldmath$\delta$\unboldmath}}
\newcommand{\circlambda}{\mbox{$\Lambda$
             \kern-.85em\raise1.5ex
             \hbox{$\scriptstyle{\circ}$}}\,}
\renewcommand \deltabf{\boldsymbol{\delta}}
\newtheorem{proposition}{Proposition}
\def\cC{\mbox{$\mathcal C$}}
\def\cC{\mbox{$\mathcal C$}}
\def\cL{\mbox{$\mathcal L$}}
\def\cN{\mbox{$\mathcal N$}}
\def\cN{\mbox{$\mathcal N$}}
\def\cW{\mbox{$\mathcal W$}}
\newcommand{\bbm}{\begin{bmatrix}}
\newcommand{\ebm}{\end{bmatrix}}
\newcommand{\bit}{\begin{itemize}}
\newcommand{\eit}{\end{itemize}}
\newcommand{\ben}{\begin{enumerate}}
\newcommand{\een}{\end{enumerate}}
\newcommand{\bdesc}{\begin{description}}
\newcommand{\edesc}{\end{description}}
\newcommand{\bea}{\begin{array}}
\newcommand{\eea}{\end{array}}
\newcommand{\beqa}{\begin{eqnarray}}
\newcommand{\eeqa}{\end{eqnarray}}
\newcommand{\be}{\begin{equation}}
\newcommand{\ee}{\end{equation}}
\newcommand{\Comment}[1]{}
\newcommand{\boe}{{\mbox{\boldmath $e$}}}
\newcommand{\bn}{{\mbox{\boldmath $n$}}}
\newcommand{\bp}{\mbox{\boldmath $p$}}
\newcommand{\bv}{{\mbox{\boldmath $v$}}}
\newcommand{\bz}{{\mbox{\boldmath $z$}}}
\newcommand{\bA}{{\mbox{\boldmath $A$}}}
\newcommand{\bI}{{\mbox{\boldmath $I$}}}
\newcommand{\bM}{{\mbox{\boldmath $M$}}}
\newcommand{\bS}{{\mbox{\boldmath $S$}}}
\newcommand{\bU}{{\mbox{\boldmath $U$}}}
\newcommand{\bX}{{\mbox{\boldmath $X$}}}
\newcommand{\bZ}{{\mbox{\boldmath $Z$}}}
\newcommand{\btheta}{{\mbox{\boldmath $\theta$}}}
\newcommand{\bnu}{{\mbox{\boldmath $\nu$}}}
\newcommand{\testinv}{\mathop{\gtrless}\limits^{H_1}_{H_0}}
\newcommand{\testp}{\mathop{\gtrless}\limits^{H_1}_{H_0}}
\begin{document}

\title{{\color{black}Joint ML-Bayesian Approach to Adaptive Radar Detection in the presence of Gaussian Interference}}

\author{Chaoran Yin},
\author{Tianqi Wang},
\author{Linjie Yan},
\author{Chengpeng Hao},
\member{Senior Member, IEEE}
\affil{Institute of Acoustics,

Chinese
Academy of Sciences, Beijing, 100190, China}

\author{Alfonso Farina}
\member{Life Fellow, IEEE}
\affil{Selex ES (retired), a consultant, Roma, Italy}

\author{Danilo Orlando}
\member{Senior Member, IEEE}
\affil{Dipartimento di Ingegneria dell'Informazione,
Universita di Pisa, Via Caruso 16, 56122 Pisa, Italy}

\receiveddate{This work was supported
by the National Natural Science Foundation of China under Grant 62201564.
The work of D.Orlando was partially supported by the Italian Ministry
of Education and Research (MUR) in the framework of the FoReLab project
(Departments of Excellence) and in part by the European Union in the
NextGenerationEU plan through the Italian program ``Bando PRIN 2022'',
D.D. 104/2022 (PE7, project ``CIRCE,'' code H53D23000420006).}

\corresp{ {\itshape (Corresponding author: Chengpeng Hao)}. }

\authoraddress{
Chaoran Yin, Tianqi Wang, Linjie Yan, and Chengpeng Hao are with the Institute of Acoustics, Chinese
Academy of Sciences, Beijing
100190, China, (e-mail: \href{mailto:yinchaoran18@mails.ucas.ac.cn}{yinchaoran18@mails.ucas.ac.cn},
\href{mailto:wangtianqi@mail.ioa.ac.cn}{wangtianqi@mail.ioa.ac.cn},
\href{mailto:yanlinjie16@163.com}{yanlinjie16@163.com},
\href{mailto:haochengp@mail.ioa.ac.cn}{haochengp@mail.ioa.ac.cn});
Alfonso Farina is with Selex ES (retired), a consultant, Roma, Italy.
(e-mail: \href{mailto:alfonso.farina@outlook.it}{alfonso.farina@outlook.it});
Danilo Orlando is with Dipartimento di Ingegneria dell'Informazione,
Universita di Pisa, Via Caruso 16, 56122 Pisa, Italy. (e-mail: \href{mailto:danilo.orlando@unipi.it}{danilo.orlando@unipi.it});
}


\maketitle

\begin{abstract}This paper addresses the adaptive radar target detection problem in the presence of
Gaussian interference with unknown statistical properties. To this end, the
problem is first formulated as a binary hypothesis test, and then we derive
a detection architecture grounded on the {\color{black} hybrid of Maximum Likelihood (ML) and Maximum A Posterior (MAP)
approach}. Specifically, we resort to the hidden discrete latent variables in conjunction with the Expectation-Maximization (EM) algorithms which cyclically updates the estimates of the unknowns.
In this framework, the estimates of the {\em a posteriori} probabilities under each hypothesis
are representative of the inherent nature of data and
used to decide for the presence of a potential target.
In addition, we prove that the developed detection scheme
ensures the desired Constant False Alarm Rate property with respect to the unknown interference
covariance matrix.
Numerical examples obtained through synthetic and real recorded data corroborate the effectiveness of the proposed
architecture and show that the MAP-based approach ensures evident improvement with respect to the
conventional generalized likelihood ratio test at least for the considered scenarios and parameter setting.
\end{abstract}

\begin{IEEEkeywords}clutter suppression, covariance matrix estimation, expectation maximization, maximum a posteriori probability, radar, target detection
\end{IEEEkeywords}

\section{Introduction}
I{\scshape n} the course of the last century, radar systems are becoming increasing ubiquitous
and affect many aspects of real life. As a consequence,
adaptive radar target detection in the presence
of Gaussian background with unknown statistical properties has been addressed for decades \cite{HaoSpringer}.
Based on the well-known Neyman-Pearson criterion \cite{KayBook},
a plethora of solutions can be found in the open literature, especially
detection architectures devised under increasingly challenging working
scenarios due to the development of electronic technology in the last years.
In the seminal paper by Kelly \cite{Kelly1986GLRT}, the related binary hypothesis testing
problem is solved by means of the Generalized
Likelihood Ratio Test (GLRT), which consists in comparing likelihood ratio
to a threshold corresponding to a pre-assigned False Alarm Probability ($P_{fa}$);
the unknown parameters are replaced by their maximum likelihood estimates. The
GLRT enjoys the desired Constant False Alarm Rate (CFAR) property with respect to the background
and uses a set of secondary data, free of target components and sharing
the same interference characterization as that in the Cell Under Test (CUT).
An {\em ad hoc} modification of the GLRT is devised in \cite{AMF1992}
under the assumption that the covariance of the Gaussian interference is perfectly known and then replaced
by the sample covariance matrix formed using the secondary data. The Adaptive
Matched Filter (AMF) developed in \cite{AMF1992} reduces significantly the computational load
at the cost of a slight degradation of the
detection performance with respect to the GLRT. This result inspired a continuous investigation
of several approaches in the radar community also considering that there do not exist
Uniformly Most Powerful (UMP) detectors for this problem \cite{DeMaio-KA2010}.

As a matter of fact, the detection performance of the GLRT and AMF heavily
depends on the quality of the estimates of Clutter Covariance Matrix (CCM).
In this respect, the knowledge-aided framework has been proposed which consists in
exploiting prior knowledge about the systems and/or
background at the design stage.
For example, the symmetry of the antenna array or pulse trains
is widely considered, which leads to a persymmetric CCM that is
Hermitian with respect to the diagonal and symmetric with respect to the counter-diagonal \cite{Nitzberg1980persym,GaoPACE,Hao2014Pers,PaillousPer1,GINOLHAC2014242,Cai1992,Liu2015-1}.
This specific structure greatly reduces the unknown parameters in the matrix
and thus allows for enhancement of the estimation quality for a given number of secondary data.
As an alternative approach, symmetric spectrum detectors are derived by exploiting the CCM in real domain
that arises when the power spectrum density of the received clutter is symmetric with respect to zero
after compensation for the Doppler frequency \cite{Foglia2017Symmetry,DeMaioSymmetric,Hao-ICASSP16}.
It is important to notice that the above knowledge-aided detectors enjoy
reliable detection performance in the {\em sample-starved} situations \cite{Foglia2017joint,Hao2016Joint}
where the volume of uniform secondary data is limited and the conventional detectors significantly
impair their performance \cite{RMB}.

Another approach used to capitalize the target energy consists in taking into account
the fact that target components are present in adjacent range/Doppler bins (spillover)
\cite{Spillover2011Danilo}. This energy contamination is due to the mismatch between the
sampling point and the true position of the
target and, consequently, the latter straddles between two consecutive cells,
possibly causing significant signal power loss in the CUT.
In \cite{Spillover2011Danilo}, the authors first develop the spillover architecture using continuous range bins and
come up with three detectors based on one-step and two-step GLRT
that are capable of joint detection and localization of a target within the CUT.
It has been verified that when the mismatch occurs, the detectors devised under the
assumption outperform the counterparts and provide accurate target
range estimation within a given range cell \cite{Hao2015Persy,Yan2017Symmetric,YLJ2020Parametric,Maio2014}.
This architecture is extended to the range-Doppler plane for Multi-Input
Multi-Output radar in \cite{WangTQ2024-2Dspillover}, wherein an {\em ad hoc} approximation of the
GLRT is adopted due to complex mathematics.
It is also worth mentioning that knowledge-aided spillover detectors are
obtained in \cite{Hao2015Persy,Yan2017Symmetric} to further enhance the
robustness of the systems.

Subspace signal model represents another viable mean to deal with the mismatch
and target energy loss in radar operating scenarios \cite{FarinaOTH2003,Danilo2022Subspace-I}. In fact, such paradigm
arises from the desire to cope with the uncertainty
related to the radar array steering direction caused by, for example, unbalanced channels, miscalibration
errors, and mutual coupling \cite{VT2002,Klemm2002,Bandiera2009}, and mitigate the energy degradation in the presence
of signal mismatches. In this framework, the uncertainty is accounted for by the unknown
coordinates of a signal that lies in a subspace
(perfectly known or known only by the dimension)
\cite{DD,8082118,9140357,Liuw-DoubleSpaceII,Ciuonzo2016StructuredInter-I,Ciuonzo2016StructuredInter-II, RicciSubspace,XU201879}.
In \cite{Danilo2022Subspace-I} and \cite{Danilo2022Subspace-II}, the authors derive several detectors
based on GLRT for both unconstrained coordinates (first-order) and constrained coordinates by
a prior distribution (second-order),
and accordingly establish a unified framework for adaptive subspace detection.
The subspace detectors are desirable when radar is working on the searching mode\cite{Bandiera2009}
due to a certain level of tolerance for target mismatch (robustness).
On the other hand, selective detectors such as Rao test \cite{Maio-Rao,Orlando-RAO},
which are sensitive to the mismatch and tend to reject the target misaligned with the nominal steering
vector, are preferable in target localization.
In \cite{Angelo2020CFAR}, the problem of designing robust or selective detectors is addressed
by deriving the decision region boundary of the clustered data in a suitable plane formed
by maximal invariants \cite{Invariant1995}. In particular, it is shown that
the selective detector can be obtained without sacrificing detection performance under matched
conditions.

Observe that the above detection architectures share the common assumption that the
unknown parameters are considered deterministic. As an alternative, the Bayesian approach
consists in assuming that the unknowns are random variables and obey {\em a priori}
distributions \cite{Murphy2012,Svensson2005IWishart,4359546,LI20191}. Such a line of reasoning leads to the so-called {\em a posteriori}
distributions and has been widely applied in radar target detection. For example, \cite{DeMaio-KA2010}
assumes that the CCM is inverse Wishart distributed and develops three Bayesian detectors based on the GLRT.
As a consequence, the prior knowledge is included via a colored-loading and greatly improves
the parameter estimation performance.
The effectiveness is corroborated through extensive examples conducted on simulation and real recorded databases.
In \cite{WangPu2011BayesianPara}, Bayesian parametric detectors are devised by assuming
{\em a priori} distribution for the noise in an auto-regressive process.
For cognitive radar \cite{7906008}, Bayesian method is an alternative in the decision-action phase which
adaptively optimizes the transmitted waveform according to the knowledge perceived from the background \cite{8398580}.
Furthermore, as indicated in \cite{Yin2022learningSPL}, the Bayesian approach can also be used to
learn the specific structure of the unknown CCM, which
is a preliminary step for applying the knowledge-aided detectors mentioned before.
Given the advantages of the Bayesian philosophy, \cite{MAPRatioTest2021}
proposes a Maximum A Posterior(MAP) test for target detection which compares the ratio
of {\em a posteriori} distributions maximized over joint parameter and hypothesis space
with a threshold set according to the false alarm rate. Illustrative examples show that
the detector performs significantly better than the GLRT-based detectors.

{\color{black}
Therefore, recalling that a UMP test for the fundamental
problem defined in \cite{Kelly1986GLRT} does not exist and encouraged
by the aforementioned efforts towards more and more
powerful decision rules, in this paper, we provide an
innovative solution to the conventional adaptive detection
problem by resorting to a joint Bayesian and Maximum Likelihood (ML) approach.
Unlike \cite{MAPRatioTest2021}, we do not require any {\em a priori} information about the
unknown parameters and the {\em a posteriori} probabilities are
iteratively obtained by applying the Expectation-Maximization (EM)
algorithm \cite{EM1977,YLJ2023Hete,HeteroEM2024anglo,Pia2021Learning}. Specifically, we introduce a fictitious
discrete random variable \cite{Murphy2012} that represents the status
of the CUT for what concerns the target presence. This
random variable is not observable and its value assigns
a label (or class) to the CUT. Thus, we modify the original data
distribution in order to account for this hidden random
variable and estimate the corresponding unknown
parameters through the EM. This line of reasoning allows us
to obtain the estimates of the {\em a posteriori} probabilities
that the hidden class takes on a value given
the data from the CUT. Such probabilities can then
be used to devise a
decision scheme grounded on the minimization of the Bayes risk and that
exploits the ML-based estimates returned by the EM algorithm.
Remarkably, we demonstrate that
the proposed decision scheme ensures the
CFAR property with respect to the interference covariance matrix.
The main contributions of this paper are
\begin{itemize}
\item the design of a joint Bayesian-ML detection architecture that
allows for a control of the $P_{fa}$ through a suitable
parameter. Notice also that the proposed approach
can be considered as a generalization of the MAP design criterion;
\item the application of the EM algorithm to estimate the unknown parameters
used to build up the decision scheme;
\item a formal proof of the CFAR property of the
proposed detector showing that the invariance with
respect to the interference covariance matrix holds true
when the iterative procedure starts from suitable initial values.
\end{itemize}
}

The remainder of the paper is organized as follows. The next section is devoted to the
problem formulation, whereas the Section III contains the detection architecture design and parameter estimation procedures.
Section IV analytically demonstrates the CFAR property of the proposed architecture, while illustrative examples and the related discussion are given in Section V.
Finally, in Section VI, we draw some concluding remarks and present possible future work routes.

\subsection{Notation}
In what follows, vectors (matrices) are denoted by boldface lower (upper) case letter.
Superscripts $(\cdot)^T$ and $(\cdot)^\dag$ denote transpose and complex conjugate transpose, respectively.
$\bI_N$ stands for an identity matrix of size $N\times N$.
$\bnu(\bM)$ is the vector containing the general distinct entries of the matrix $\bM$.
The notation $\sim$ means ``is distributed as''. $\cC\cN_N(\bm{\mu},\bX)$
denotes the $N$-dimensional circular complex Gaussian distribution with mean $\bm{\mu}$ and
positive-definite covariance matrix $\bX$. $P(c=m)$ stands for the probability that a discrete
random variable $c$ equals to $m$.
$\cC\cW(K,N;\bI_N)$
denotes the $N$-dimensional complex Wishart distribution with $K$ degrees of freedom and scale matrix $\bI_N$.
$\det(\cdot)$ and $\mbox{Tr}(\cdot)$ represent the determinant and trace of a matrix, respectively.
$\left|\cdot\right|$ denotes the modulus of a scalar.
The symbol $\bf 0$ denotes a matrix or vector of zeros of proper dimensions.
As for the numerical sets, $\mathds{C}$ is the set of complex numbers and $\mathds{C}^{N\times M}$ is the Euclidean space
of ($N\times M$)-dimensional complex matrices (or vectors, if $M=1$).

\section{Problem Formulation and EM-based Estimation}

Let us assume a radar system equipped with $N\geq 2$ time and/or space channels.
The echoes of the transmitted waveform from the region of interest are properly
collected, sampled, and processed to form several $N$ dimensional vectors representative of consecutive range bins. We denote by $\bz\in\mathds{C}^{N\times 1}$ the signal
vector corresponding to the CUT and we want to decide whether or not it contains target component.
To this end, the signal vectors in the neighbourhood of the CUT, denoted by $\bz_1,\ldots,\bz_K\in\mathds{C}^{N\times 1}$,
are utilized as secondary data \footnote{\color{black}{Notice that secondary data $\bz_k$s are assumed free of target components regardless which hypothesis is in force.}}
to estimate the interference statistical properties.
Then, the radar target detection problem can be formulated as a binary hypothesis test, i.e.,
\begin{equation}\label{hypotheses}
\left\{
\begin{array}{lll}
H_0:\left\{
\begin{array}{lll}
\bz = \bn,\\
\bz_k = \bn_k, k=1,\ldots,K,
\end{array}
\right.\\
H_1:\left\{
\begin{array}{lll}
\bz = \alpha\bv+ \bn,\\
\bz_k = \bn_k, k=1,\ldots,K,
\end{array}
\right.
\end{array}
\right.
\end{equation}
where $\bn,\bn_k, k=1,\ldots,K,$ are the interference component modeled as statistically independent circular complex
Gaussian distributed variables with zero mean and covariance matrix $\bM\in\mathds{C}^{N\times N}$,
$\alpha$ stands for the unknown complex amplitude of target echo, and $\bv\in\mathds{C}^{N\times 1}$ is the
nominal steering vector. Denote by $\bZ = [\bz,\bz_1,\ldots,\bz_K]\in\mathds{C}^{N\times (K+1)}$ the overall data matrix, then the Probability Density Function (PDF) of $\bZ$ under $H_0$ and $H_1$ is given by
\begin{equation}
f_0(\bZ;\btheta_0) = \left[\frac{1}{\pi^N\det(\bM)}\right]^{(K+1)}\exp{\left\{-\mbox{Tr}\left[\bM^{-1}\bZ\bZ^\dag\right]\right\}},
\end{equation}
\begin{multline}
f_1(\bZ;\btheta_1) = \left[\frac{1}{\pi^N\det(\bM)}\right]^{(K+1)}\\
\times\exp{\left\{-\mbox{Tr}\left[\bM^{-1}\left((\bz-\alpha\bv)(\bz-\alpha\bv)^\dag+\bS\right)\right]\right\}},
\end{multline}
respectively, where $\btheta_0=\bm{\nu}(\bM)$, $\btheta_1=[\alpha,\btheta_0^T]^T$, {\color{black} and $\bS = K\bS'$ with $\bS' = \frac{1}{K}\sum_{k=1}^K\bz_k\bz_k^\dag$
the sample covariance matrix based upon secondary data}.

Let us now introduce a latent variable
$c$, whose Probability Mass Function (PMF) is $P(c=t)=p_t>0$, $t=0,1$, with $p_0+p_1=1$.
Moreover, we assume that when $c=0$, then $\alpha = 0$, whereas if $c\neq 0$, then $\alpha\neq 0$. It follows that the PDF of $\bZ$ can be recast as
\begin{equation}\label{superposition}
f(\bZ;\deltabf) = p_0f_0(\bZ;\btheta_0)+p_1f_1(\bZ;\btheta_1),
\end{equation}
where $\deltabf = [p_0,p_1,\btheta_0^T,\btheta_1^T]^T$. It turns out that the
PDF of $\bZ$ becomes a convex linear combination of the PDFs under $H_0$ and $H_1$.
In this context, since $c$ is not observable, it is difficult to estimate its PMF
through the maximum likelihood approach and, hence, we estimate the unknown parameters of $f(\bZ;\deltabf)$ by resorting to the
EM algorithm which iteratively provides closed-form estimate
updates towards at least one local stationary point.
More precisely, in the cyclic optimization procedure of the EM algorithm, the $l$th iteration
contains two steps referred to as E-step and M-step \cite{Pia2021Learning,HeteroEM2024anglo}, respectively.
The E-step leads to the computation of the {\em a posteriori} probabilities of each hypothesis given $\bZ$,
namely,
\begin{eqnarray}\label{q0}
q_0^{(l-1)}(\bZ)
=\frac{\widehat{p}_0^{(l-1)}f_0(\bZ;\widehat{\btheta}_0^{(l-1)})}{\widehat{p}^{(l-1)}_0f_0(\bZ;\widehat{\btheta}_0^{(l-1)})
+\widehat{p}_1^{(l-1)} f_{1}(\bZ;\widehat{\btheta}_{1}^{(l-1)})},
\end{eqnarray}
\begin{eqnarray}\label{qm}
q_1^{(l-1)}(\bZ)=\frac{\widehat{p}^{(l-1)}_1 f_1(\bZ;\widehat{\btheta}_{1}^{(l-1)})}
{\widehat{p}^{(l-1)}_0f_0(\bZ;\widehat{\btheta}_0^{(l-1)})
+\widehat{p}^{(l-1)}_{1} f_{1}(\bZ;\widehat{\btheta}_{1}^{(l-1)})},
\end{eqnarray}
with $\widehat{p}_0^{(l-1)}$, $\widehat{\btheta}_0^{(l-1)}$, $\widehat{p}_1^{(l-1)}$ and $\widehat{\btheta}_{1}^{(l-1)}$ being the estimates of $p_0$, $\btheta_0$, $p_1$ and $\btheta_{1}$, at the $(l-1)$th iteration, respectively.
As for the M-step, it consists in solving the following optimization problem
\begin{equation}\label{Mstep}
\widehat{\deltabf}^{(l)}=\arg\max\limits_{\deltabf}\left\{\sum\limits_{t=0}^1 q_t^{(l-1)}(\bZ)\left(\log f_t(\bZ;\btheta_t)+ \log p_t\right)\right\},
\end{equation}
where $\widehat{\deltabf}^{(l)}$ is the estimate of $\deltabf$ at the $l$th iteration. As a result of the iterative application of the E-step and M-step, we obtain a nondecreasing sequence of log-likelihood function values such that $\log{f(\bZ;\deltabf^{(l+1)})}>\log{f(\bZ;\deltabf^{(l)})}$, and this procedure is repeated until a stopping criterion is not satisfied.
The above cyclic procedures need an initial value for $\deltabf$ denoted by $\widehat{\deltabf}^{(0)}$.

{\color{black}As a final remark, notice that the hidden (or missing) random variable $c$ can be viewed
as a random parameter with an {\em a priori} distribution given by $p_t$, $t=0,1$. Thus, problem \eqref{hypotheses} encompasses
both deterministic and random parameters. This point is important to better frame
the detector devised in the next subsection.}
\subsection{Detection Architecture Design}
{\color{black}
From the above procedures, it turns out that the E-steps iteratively update the
estimates of {\em a posteriori} probabilities which can be used to evaluate the ``responsibility''
that the hypotheses take for ``explaining'' the data \cite{bishop2006pattern}.
Specifically, let us consider the Bayesian framework\footnote{Notice that we are implicitly treating
$c$ as a random parameter.}
and if the all the parameters
were known, the decision rule based upon the minimization of the Bayesian risk would be
\be
\frac{p_1 f_1(\bZ;\btheta_1)}{p_0 f_0(\bZ;\btheta_0)}\testp \frac{C_{10}-C_{00}}{C_{01}-C_{11}},
\label{eqn:bayesRisk_detector}
\ee
where $C_{ij}$, $i=0,1$, $j=0,1$, is the cost for selecting $H_i$ when $H_j$ is true.
However, assuming the perfect knowledge of the parameter values is not reasonable in practice and, hence,
we replace them with estimates returned by the EM algorithm to come up with
\be
\frac{\widehat{p}^{(l_{\max})}_1 f_1(\bZ;\widehat{\btheta}^{(l_{\max})}_1)}
{\widehat{p}^{(l_{\max})}_0 f_0(\bZ;\widehat{\btheta}^{(l_{\max})}_0)}\testp \frac{C_{10}-C_{00}}{C_{01}-C_{11}}.
\label{eqn:bayesRisk_detector_MLest}
\ee
Thus, the considered framework jointly exploits the Bayesian and and the ML approach \cite{jointMAP_ML}.
Finally, the costs are used to tune $P_{fa}$, which is of primary concern in radar
\cite{Richards}, by defining the following detection threshold
\be
\eta=\frac{C_{10}-C_{00}}{C_{01}-C_{11}}.
\ee
The corresponding detection architecture becomes
\be
\frac{\widehat{p}^{(l_{\max})}_1 f_1(\bZ;\widehat{\btheta}^{(l_{\max})}_1)}
{\widehat{p}^{(l_{\max})}_0 f_0(\bZ;\widehat{\btheta}^{(l_{\max})}_0)}\testp\eta.
\label{bayes_ML_detector}
\ee
Three remarks are now in order. First, notice that $\eta$ can be set according to the desired $P_{fa}$.
Second, the left-hand side of \eqref{bayes_ML_detector}
shares a similar
structure with the GLRT that, however, does not exploit
any estimate of $p_0$ and $p_1$.
It is also important to highlight that \eqref{bayes_ML_detector}
is statistically equivalent to
\be
\frac{q_1^{(l_{\max})}(\bZ)}{q_0^{(l_{\max})}(\bZ)}\testp\eta,
\ee
that can be viewed as a ``generalized MAP'' detector. The word ``generalized''
is used to indicate that the above detector can be considered a modification of
the MAP rule obtained by introducing $\eta$ and
by replacing the unknown parameters with ML-based estimates.
In the following, this detector
\label{eqn:bayes_ML_detector} will be referred to as Joint Bayesian-ML Detector
based on EM algorithm (EM-BML-D).
}

The final expression of \eqref{bayes_ML_detector} can be obtained by solving \eqref{Mstep}. To this end, let us  observe that in \eqref{Mstep}, the optimization with respect to $\bp=[p_0,p_1]^T$ and $\btheta=[\btheta_0^T,\btheta_{1}^T]^T$ are independent. Thus, the estimates of $\bp$ and $\btheta$ at the $l$th iteration can be separately obtained as
\begin{subequations}\label{separate}
\begin{multline}\label{s-theta}
\widehat{\btheta}^{(l)}=\arg\max\limits_{\bm{\theta}}\left\{q_0^{(l-1)}(\bZ)\log f_0(\bZ;\btheta_0)\right.\\
\left.+q_1^{(l-1)}(\bZ)\log f_{1}(\bZ;\btheta_{1})\right\}
\end{multline}
\begin{equation}\label{s-pi}
\widehat{\bp}^{(l)}=\arg\max\limits_{\bp}\sum\limits_{t=0}^1 q_t^{(l-1)}(\bZ)\log p_t.
\end{equation}
\end{subequations}
It can be shown that the estimates of $p_0$ and $p_1$ are given by
\begin{eqnarray}\label{p}
\widehat{p}_0^{(l)} =q_0^{(l-1)}(\bZ),\nonumber \\
\widehat{p}_1^{(l)} =q_1^{(l-1)}(\bZ).
\end{eqnarray}
Now, the optimization problem with respect to $\btheta$ can be recast as
\begin{multline}\label{theta}
\max\limits_{\text{\bM},\alpha}\left\{q_0^{(l-1)}(\bZ)\log f_0(\bZ;\btheta_0)
+q_1^{(l-1)}(\bZ)\log f_{1}(\bZ;\btheta_{1})\right\},\\
\Rightarrow\max\limits_{\text{\bM},\alpha}\left\{-(K+1)\log{\det{(\bM)}}
-\text{Tr}\left[\bM^{-1}\left(q_0^{(l-1)}(\bZ)\bZ\bZ^\dag\right.\right.\right.\\
\left.\left.\left.\qquad+q_1^{(l-1)}(\bZ)
\left((\bz-\alpha\bv)(\bz-\alpha\bv)^\dag+\bS\right)\right)\right]\right\}.
\end{multline}
Setting to zero the first derivative with respect to $\bM$ yields
\begin{multline}
-(K+1)(\bM^T)^{-1}+q_0^{(l-1)}(\bZ)\left[(\bM^T)^{-1}\left(\bZ\bZ^\dag\right)^T(\bM^T)^{-1}\right]\\
+q_1^{(l-1)}(\bZ)\left[(\bM^T)^{-1}\left((\bz-\alpha\bv)(\bz-\alpha\bv)^\dag+\bS\right)^T(\bM^T)^{-1}\right]\\
 = \bf 0.
\end{multline}
Thus, the estimate of $\bM$ at the $l$th M-step is given by
\begin{multline}\label{Mestimate}
\widehat{\bM}_{(l)} = \frac{1}{K+1}\left[q_0^{(l-1)}(\bZ)\bZ\bZ^\dag\right.\\
\left.+q_1^{(l-1)}(\bZ)
\left((\bz-\alpha\bv)(\bz-\alpha\bv)^\dag+\bS\right)\right].
\end{multline}

Plugging \eqref{Mestimate} into \eqref{theta} leads to
\begin{eqnarray}\label{alpha}
&&\max\limits_{\alpha}\quad -(K+1)\log\det(\widehat{\bM}_{(l)})\nonumber\\
&\Rightarrow&\min\limits_{\alpha}\quad \log\det\left\{q_0^{(l-1)}(\bZ)\bZ\bZ^\dag\right.\nonumber\\
&&\left.+q_1^{(l-1)}(\bZ)
\left[(\bz-\alpha\bv)(\bz-\alpha\bv)^\dag+\bS\right]\right\}\nonumber\\
&\Rightarrow&\min\limits_{\alpha}\quad \log\det\left\{\bA_{(l)}+q_1^{(l-1)}(\bZ)(\bz-\alpha\bv)(\bz-\alpha\bv)^\dag\right\},\nonumber\\
\end{eqnarray}
where $\bA_{(l)} = q_0^{(l-1)}(\bZ)\bZ\bZ^\dag+q_1^{(l-1)}(\bZ)\bS$.
The determinant in \eqref{alpha} can be further written as
\begin{multline}
\det\left\{\bA_{(l)}+q_1^{(l-1)}(\bZ)(\bz-\alpha\bv)(\bz-\alpha\bv)^\dag\right\}\\
=\det(\bA_{(l)})\left[1+q_1^{(l-1)}(\bZ)(\bz-\alpha\bv)^\dag\bA_{(l)}^{-1}(\bz-\alpha\bv)\right].
\end{multline}
Setting to zero the first derivative of the objective function with respect to $\alpha$ leads to
\begin{equation}\label{alphaestimate}
\widehat{\alpha}^{(l)} = \frac{\bv^\dag\bA_{(l)}^{-1}\bz}{\bv^\dag\bA_{(l)}^{-1}\bv}.
\end{equation}

After $l_{\text{max}}$ iterations, plugging the above estimates into \eqref{bayes_ML_detector}, the {\color{black}EM-BML-D} can be finally written as
\begin{multline}\label{testfinal}
\frac{\widehat{p}^{(l_{\text{max}})}_1}{\widehat{p}^{(l_{\text{max}})}_0}
\exp\left\{\bz^\dag\widehat{\bM}_{(l_\text{max})}^{-1}\bz\right.\\
\left.-(\bz-\widehat{\alpha}^{(l_\text{max})}\bv)^\dag
\widehat{\bM}_{(l_\text{max})}^{-1}(\bz-\widehat{\alpha}^{(l_\text{max})}\bv)^\dag\right\} \testinv \eta.
\end{multline}

In next section, we focus on the CFAR property of \eqref{testfinal} with respect to $\bM$, and show that
the test statistic of {\color{black}EM-BML-D} is invariant with respect to the interference covariance matrix under a specific initialization.

\section{CFAR properties}

\label{CFAR}
Since the decision statistic in \eqref{bayes_ML_detector} depends on the iteration index $l$, we resort
to the mathematical induction to prove the following proposition.

\begin{proposition}
$\frac{q_1^{(l)}(\bZ)}{q_0^{(l)}(\bZ)}$ is CFAR with respect to $\bM$ provided that the distribution
of $\frac{q_1^{(l-1)}(\bZ)}{q_0^{(l-1)}(\bZ)}$ is invariant with respect to $\bM$.
\end{proposition}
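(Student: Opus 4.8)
The plan is to prove the inductive step by first writing $T^{(l)}:=q_1^{(l)}(\bZ)/q_0^{(l)}(\bZ)$ as an explicit functional of the data and then exploiting how it transforms under a congruence of $\bZ$. I would actually carry the induction at the level of the \emph{functional} invariances of the responsibilities $q_0^{(l-1)}(\cdot),q_1^{(l-1)}(\cdot)$ --- these are precisely what make the previous-iteration distribution $\bM$-free --- the base case being supplied by a suitably chosen initialization $\widehat{\deltabf}^{(0)}$. \emph{Step 1 (explicit form).} By \eqref{p}, $\widehat{p}_t^{(l)}=q_t^{(l-1)}(\bZ)$, so $T^{(l)}=\big(q_1^{(l-1)}(\bZ)/q_0^{(l-1)}(\bZ)\big)\,f_1(\bZ;\widehat{\btheta}_1^{(l)})/f_0(\bZ;\widehat{\btheta}_0^{(l)})$, and by \eqref{Mestimate}--\eqref{testfinal} the likelihood-ratio factor equals $\exp\{\bz^\dag\widehat{\bM}_{(l)}^{-1}\bz-(\bz-\widehat{\alpha}^{(l)}\bv)^\dag\widehat{\bM}_{(l)}^{-1}(\bz-\widehat{\alpha}^{(l)}\bv)\}$, where $\bA_{(l)}$, $\widehat{\alpha}^{(l)}$ and $\widehat{\bM}_{(l)}$ depend on the data only through $\bZ$, $\bv$ and the two scalars $q_0^{(l-1)}(\bZ),q_1^{(l-1)}(\bZ)$. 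Hence $T^{(l)}$ is a fixed, explicit functional $T^{(l)}(\bZ;\bv)$ of the data, the steering vector, and the previous responsibilities.

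\emph{Step 2 (covariance law).} I would establish that for every nonsingular $\bB\in\mathds{C}^{N\times N}$ and every $c\neq 0$,
\be
T^{(l)}(\bB\bZ;\bv)=T^{(l)}(\bZ;\bB^{-1}\bv),\qquad T^{(l)}(c\bZ;\bv)=T^{(l)}(\bZ;\bv).
\ee
Both follow by substitution: under $\bZ\mapsto\bB\bZ$ the quantities $\bS$, $\bZ\bZ^\dag$ and $\bA_{(l)}$ pick up congruence by $\bB$, whence $\widehat{\alpha}^{(l)}(\bB\bZ;\bv)=\widehat{\alpha}^{(l)}(\bZ;\bB^{-1}\bv)$ and $\widehat{\bM}_{(l)}(\bB\bZ;\bv)=\bB\,\widehat{\bM}_{(l)}(\bZ;\bB^{-1}\bv)\,\bB^\dag$, and the quadratic-form factor is unchanged because $\bz\mapsto\bB\bz$ and the shift $\widehat{\alpha}^{(l)}\bv$ absorb the factors $\bB^{-\dag}(\cdot)\bB^{-1}$; under $\bZ\mapsto c\bZ$ the common normalising/Jacobian constants in $f_0$ and $f_1$ cancel in the ratios defining the $q_t$'s, $\widehat{\bM}_{(l)}$ scales by $|c|^2$ and $\widehat{\alpha}^{(l)}$ by $c$, again leaving $T^{(l)}$ unchanged. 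This is exactly where the induction enters: I would assume (and then re-derive for index $l$) that the previous responsibilities satisfy $q_t^{(l-1)}(\bB\bZ;\bv)=q_t^{(l-1)}(\bZ;\bB^{-1}\bv)$ and $q_t^{(l-1)}(c\bZ;\bv)=q_t^{(l-1)}(\bZ;\bv)$. Taking $\bB=c\bI_N$ in the first identity and combining with the second yields the scale-invariance in the steering vector, $T^{(l)}(\bZ;c\bv)=T^{(l)}(\bZ;\bv)$.

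\emph{Step 3 (whitening and canonicalisation of $\bv$).} Under $H_0$ one has $\bZ\stackrel{d}{=}\bM^{1/2}\bZ_I$, where $\bZ_I$ has i.i.d.\ $\cC\cN_N(\bzero,\bI_N)$ columns --- a law independent of $\bM$. By Step 2 with $\bB=\bM^{1/2}$, $T^{(l)}(\bZ;\bv)\,\big|_{H_0}\stackrel{d}{=}T^{(l)}(\bZ_I;\bM^{-1/2}\bv)$. Choosing a unitary $\bU$ with $\bU(\bM^{-1/2}\bv)=\|\bM^{-1/2}\bv\|\,\bv_0$ for a fixed unit vector $\bv_0$, using $\bU\bZ_I\stackrel{d}{=}\bZ_I$ together with $T^{(l)}(\bZ_I;\bM^{-1/2}\bv)=T^{(l)}(\bU\bZ_I;\bU\bM^{-1/2}\bv)$ (Step 2 with $\bB=\bU^{-1}$), and finally the scale-invariance in $\bv$, gives $T^{(l)}(\bZ;\bv)\,\big|_{H_0}\stackrel{d}{=}T^{(l)}(\bZ_I;\bv_0)$, whose distribution depends only on $N$, $K$ and the initialisation --- not on $\bM$. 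Therefore $q_1^{(l)}(\bZ)/q_0^{(l)}(\bZ)$ is CFAR with respect to $\bM$, and the recursion is closed.

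\emph{Main obstacle.} I expect the delicate part to be the bookkeeping in Step~2: $q_t^{(l-1)}(\bZ)$ enters $T^{(l)}$ both as an outer multiplicative factor and buried inside $\bA_{(l)}$, $\widehat{\bM}_{(l)}$ and $\widehat{\alpha}^{(l)}$, so every occurrence must be checked to transform consistently; and conceptually one must resist phrasing the induction only for the \emph{distribution} of $q_1^{(l-1)}/q_0^{(l-1)}$ --- the step genuinely needs the stronger fact that the responsibilities are functions of a maximal invariant of the congruence action on $(\bZ,\bv)$, a property that the base-case initialization must be chosen to enforce.
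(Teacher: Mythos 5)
Your proof is correct and follows essentially the same route as the paper's: whiten the data with $\bM^{-1/2}$, rotate the whitened steering vector onto a fixed direction with a unitary matrix, and observe that the resulting statistic depends only on identity-scale Wishart matrices and a standard complex Gaussian vector, whose joint law is free of $\bM$. The one substantive difference is organizational but worth noting: you carry the induction on the \emph{functional} equivariance of the responsibilities $q_t^{(l-1)}(\cdot)$ under nonsingular transformations of $(\bZ,\bv)$, rather than on the distributional invariance of the ratio $q_1^{(l-1)}/q_0^{(l-1)}$ alone. As you correctly observe, the latter (which is how the proposition is phrased and how the paper's proof proceeds) is strictly weaker than what the inductive step actually consumes, since $\bar{\bA}_{(l)}$ and $\bar{\bM}_{(l)}$ are $\bM$-free in distribution only if the scalars $q_t^{(l-1)}(\bZ)$ are functions of the whitened data \emph{jointly} with the Wishart factors, not merely marginally invariant; your strengthened hypothesis, enforced by the initialization, closes this gap cleanly.
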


\begin{proof}
Let us start by observing that from  \eqref{p}, when the distribution of  $\frac{q_1^{(l-1)}(\bZ)}{q_0^{(l-1)}(\bZ)}$
does not depend on $\bM$, then also that of $\frac{\widehat{p}^{(l)}_1}{\widehat{p}^{(l)}_0}$ exhibits this property.
Furthermore, notice that\footnote{Notice that we do not consider the case that $\widehat{p}_0$,
$\widehat{p}_1$, $q_0$, or $q_1=0$.}
\begin{eqnarray}\label{q0CFAR}
q_0^{(l-1)}(\bZ)
&=&\frac{\widehat{p}_0^{(l-1)}f_0(\bZ;\widehat{\btheta}_0^{(l-1)})}{\widehat{p}^{(l-1)}_0f_0(\bZ;\widehat{\btheta}_0^{(l-1)})
+\widehat{p}_1^{(l-1)} f_{1}(\bZ;\widehat{\btheta}_{1}^{(l-1)})}\nonumber\\
&=&\frac{1}{1+\frac{\widehat{p}_1^{(l-1)}f_1(\bZ;\widehat{\btheta}_1^{(l-1)})}
{\widehat{p}_0^{(l-1)}f_0(\bZ;\widehat{\btheta}_0^{(l-1)})}}\nonumber\\
&=&\frac{1}{1+\frac{q_1^{(l-1)}(\bZ)}{q_0^{(l-1)}(\bZ)}}.
\end{eqnarray}
Therefore, the distribution of
$q_0^{(l-1)}(\bZ)$ is independent of $\bM$. The same line of reasoning can be applied to show that
$q_1^{(l-1)}(\bZ)$ has a distribution independent of $\bM$.

From \eqref{testfinal}, let us define
\begin{equation}
g(\widehat{\btheta}^{(l)}_0,\widehat{\btheta}^{(l)}_1;\bZ) =
\bz^\dag\widehat{\bM}_{(l)}^{-1}\bz
-(\bz-\widehat{\alpha}^{(l)}\bv)^\dag
\widehat{\bM}_{(l)}^{-1}(\bz-\widehat{\alpha}^{(l)}\bv)^\dag,
\end{equation}
which can be written as
\begin{eqnarray}
&&g(\widehat{\btheta}^{(l)}_0,\widehat{\btheta}^{(l)}_1;\bZ)\nonumber\\
&=&\bz^\dag\widehat{\bM}_{(l)}^{-1}\bz-\bz^\dag\widehat{\bM}_{(l)}^{-1}\bz+
(\widehat{\alpha}^{(l)})^\dag\bv^\dag\widehat{\bM}_{(l)}^{-1}\bz\nonumber\\
&&+\widehat{\alpha}^{(l)}\bz^\dag\widehat{\bM}_{(l)}^{-1}\bv-|\widehat{\alpha}^{(l)}|^2
\bv^\dag\widehat{\bM}_{(l)}^{-1}\bv\nonumber\\
&=&\frac{\bz^\dag\bA_{(l)}^{-1}\bv\bv^\dag\widehat{\bM}_{(l)}^{-1}\bz}{\bv^\dag\bA_{(l)}^{-1}\bv}
+\frac{\bv^\dag\bA_{(l)}^{-1}\bz\bz^\dag\widehat{\bM}_{(l)}^{-1}\bv}{\bv^\dag\bA_{(l)}^{-1}\bv}\nonumber\\
&&-\frac{\bv^\dag\bA_{(l)}^{-1}\bz\bz^\dag\bA_{(l)}^{-1}\bv\bv^\dag\widehat{\bM}_{(l)}^{-1}\bv}
{\left(\bv^\dag\bA_{(l)}^{-1}\bv\right)^2}.
\end{eqnarray}
Let $\bar{\bv} = \bM^{-\frac{1}{2}}\bv$ and $\bar{\bz} = \bM^{-\frac{1}{2}}\bz$,
then we have
\begin{eqnarray}
&&g(\widehat{\btheta}^{(l)}_0,\widehat{\btheta}^{(l)}_1;\bZ)\nonumber\\
&=&\frac{\bar{\bz}^\dag\bM^{\frac{1}{2}}\bA_{(l)}^{-1}\bM^{\frac{1}{2}}\bar{\bv}
\bar{\bv}^\dag\bM^{\frac{1}{2}}\widehat{\bM}_{(l)}^{-1}\bM^{\frac{1}{2}}\bar{\bz}}
{\bar{\bv}^\dag\bM^{\frac{1}{2}}\bA_{(l)}^{-1}\bM^{\frac{1}{2}}\bar{\bv}}\nonumber\\
&&+\frac{\bar{\bv}^\dag\bM^{\frac{1}{2}}\bA_{(l)}^{-1}\bM^{\frac{1}{2}}\bar{\bz}
\bar{\bz}^\dag\bM^{\frac{1}{2}}\widehat{\bM}_{(l)}^{-1}\bM^{\frac{1}{2}}\bar{\bv}}
{\bar{\bv}^\dag\bM^{\frac{1}{2}}\bA_{(l)}^{-1}\bM^{\frac{1}{2}}\bar{\bv}}\nonumber\\
&&-\frac{\bar{\bv}^\dag\bM^{\frac{1}{2}}\bA_{(l)}^{-1}\bM^{\frac{1}{2}}\bar{\bz}
\bar{\bz}^\dag\bM^{\frac{1}{2}}\bA_{(l)}^{-1}\bM^{\frac{1}{2}}\bar{\bv}
\bar{\bv}^\dag\bM^{\frac{1}{2}}\widehat{\bM}_{(l)}^{-1}\bM^{\frac{1}{2}}\bar{\bv}}
{\left(\bar{\bv}^\dag\bM^{\frac{1}{2}}\bA_{(l)}^{-1}\bM^{\frac{1}{2}}\bar{\bv}\right)^2}\nonumber\\
&=&\frac{\bar{\bz}^\dag\bar{\bA}_{(l)}^{-1}\bar{\bv}\bar{\bv}^\dag\bar{\bM}_{(l)}^{-1}\bar{\bz}}
{\bar{\bv}^\dag\bar{\bA}_{(l)}^{-1}\bar{\bv}}
+\frac{\bar{\bv}^\dag\bar{\bA}_{(l)}^{-1}\bar{\bz}\bar{\bz}^\dag\bar{\bM}_{(l)}^{-1}\bar{\bv}}
{\bar{\bv}^\dag\bar{\bA}_{(l)}^{-1}\bar{\bv}}\nonumber\\
&&-\frac{\bar{\bv}^\dag\bar{\bA}_{(l)}^{-1}\bar{\bz}\bar{\bz}^\dag\bar{\bA}_{(l)}^{-1}\bar{\bv}\bar{\bv}^\dag\bar{\bM}_{(l)}^{-1}\bar{\bv}}
{\left(\bar{\bv}^\dag\bar{\bA}_{(l)}^{-1}\bar{\bv}\right)^2},
\end{eqnarray}
where
\begin{eqnarray}
\bar{\bA}_{(l)} &=& \bM^{-\frac{1}{2}}\bA_{(l)}\bM^{-\frac{1}{2}}\nonumber\\
&=&q_0^{(l-1)}(\bZ)\bM^{-\frac{1}{2}}\bZ\bZ^\dag\bM^{-\frac{1}{2}}+q_1^{(l-1)}(\bZ)\bM^{-\frac{1}{2}}\bS\bM^{-\frac{1}{2}}
\nonumber\\
\end{eqnarray}
\begin{eqnarray}
\bar{\bM}_{(l)}
&=& \bM^{-\frac{1}{2}}\widehat{\bM}_{(l)}\bM^{-\frac{1}{2}}\nonumber\\
&=&\frac{1}{K+1}\left[q_0^{(l-1)}(\bZ)\bM^{-\frac{1}{2}}\bZ\bZ^\dag\bM^{-\frac{1}{2}}+\right.\nonumber\\
&&q_1^{(l-1)}(\bZ)
\left(\bM^{-\frac{1}{2}}(\bz-\widehat{\alpha}^{(l)}\bv)(\bz-\widehat{\alpha}^{(l)}\bv)^\dag\bM^{-\frac{1}{2}}+
\right.\nonumber\\
&&\left.\left.\bM^{-\frac{1}{2}}\bS\bM^{-\frac{1}{2}}\right)\right]\nonumber\\
&=&\frac{1}{K+1}\left[\bar{\bA}_{(l)}+q_1^{(l-1)}(\bZ)(\bar{\bz}-\widehat{\alpha}^{(l)}\bar{\bv})
(\bar{\bz}-\widehat{\alpha}^{(l)}\bar{\bv})^\dag\right].\nonumber\\
\end{eqnarray}
Since $\bM^{-\frac{1}{2}}\bZ\bZ^\dag\bM^{-\frac{1}{2}} \sim\cC\cW(K+1,N;\bI_N)$ and
$\bM^{-\frac{1}{2}}\bS\bM^{-\frac{1}{2}} \sim\cC\cW(K,N;\bI_N)$, it follows
that $\bar{\bA}_{(l)}$ is invariant to the interference covariance matrix $\bM$.

Now, we introduce a unitary transformation that rotates the
whitened steering vector into the first elementary vector
\begin{equation}
\bU^\dag\bar{\bv} = \sqrt{\bv^\dag\bM^{-1}\bv}\boe_1,
\end{equation}
where $\bU\in\mathds{C}^{N\times N}$ is a unitary matrix
and $\boe_1=[1,0,\ldots,0]^T\in\mathds{C}^{N\times 1}$.
Then $g(\widehat{\btheta}^{(l)}_0,\widehat{\btheta}^{(l)}_1;\bZ)$ can be recast as
\begin{eqnarray}
g(\widehat{\btheta}^{(l)}_0,\widehat{\btheta}^{(l)}_1;\bZ)
=&\frac{\bar{\bar{\bz}}^\dag\bar{\bar{\bA}}_{(l)}^{-1}\boe_1\boe_1^\dag\bar{\bM}_{(l)}^{-1}\bar{\bar{\bz}}}
{\boe_1^\dag\bar{\bA}_{(l)}^{-1}\boe_1}
+\frac{\boe_1^\dag\bar{\bar{\bA}}_{(l)}^{-1}\bar{\bar{\bz}}\bar{\bar{\bz}}^\dag\bar{\bar{\bM}}_{(l)}^{-1}\boe_1}
{\boe_1^\dag\bar{\bar{\bA}}_{(l)}^{-1}\boe_1}\nonumber\\
&-\frac{\boe_1^\dag\bar{\bar{\bA}}_{(l)}^{-1}\bar{\bar{\bz}}\bar{\bar{\bz}}^\dag\bar{\bar{\bA}}_{(l)}^{-1}\boe_1\boe_1^\dag\bar{\bar{\bM}}_{(l)}^{-1}\boe_1}
{\left(\boe_1^\dag\bar{\bar{\bA}}_{(l)}^{-1}\boe_1\right)^2},
\end{eqnarray}
where
\begin{equation}
\left\{
\begin{array}{lll}
\bar{\bar{\bz}} &=& \bU^\dag\bar{\bz}\sim\cC\cN(0,\bI_N),\\
\bar{\bar{\bA}}_{(l)} &=& \bU^\dag\bar{\bA}_{(l)}\bU \ \text{obeys the same distribution as} \ \bar{\bA}_{(l)},\\
\bar{\bar{\bM}}_{(l)} &=& \bU^\dag\bar{\bM}_{(l)}\bU\\
&\propto&\left[\bar{\bar{\bA}}_{(l)}+q_1^{(l-1)}(\bZ)(\bar{\bar{\bz}}-\widehat{\alpha}_w^{(l)}\boe_1)
(\bar{\bar{\bz}}-\widehat{\alpha}_w^{(l)}\boe_1)^\dag\right]
\end{array}
\right.
\end{equation}
with $\widehat{\alpha}_w^{(l)} = \sqrt{\bv^\dag\bM^{-1}\bv}\widehat{\alpha}^{(l)} =
\frac{\boe_1^\dag\bar{\bar{\bA}}_{(l)}^{-1}\bar{\bar{\bz}}}{\boe_1^\dag\bar{\bar{\bA}}_{(l)}^{-1}\boe_1}$.

With all of the results above, it can be seen that when $\frac{q_1^{(l-1)}(\bZ)}{q_0^{(l-1)}(\bZ)}$
is CFAR with respect to $\bM$, $\frac{q_1^{(l)}(\bZ)}{q_0^{(l)}(\bZ)}$
includes no real interference covariance $\bM$ in the statistical characteristic, and thus ensures the CFAR property.
\end{proof}

Gathering the above results, it turns out that when
$\frac{q_1^{(0)}(\bZ)}{q_0^{(0)}(\bZ)}$ is independent of $\bM$, we can claim that for a given $l_{\text{max}}$,
the {\color{black}EM-BML-D} proposed in \eqref{testfinal} possesses the CFAR with respect to $\bM$.

\section{Performance Assessment}

In this section, we assess the performance of the {\color{black}EM-BML-D}
in comparison with well-known competitors such as Kelly's GLRT, the AMF, Rao, and the ACE \cite{Kraut-PHE,AsympComGau1995}.
{\color{black}At the same time, in order to learn the performance degradation of the proposed algorithm due to parameter estimation,
we consider the test given by \eqref{bayes_ML_detector} with all parameters known (except for $p_0$ and $p_1$) as a benchmark.
This detector is referred to in the ensuing numerical examples as ``Benchmark''.}
To this end, we resort to
the Monte Carlo counting techniques and obtain the detection threshold corresponding to a pre-assigned
$P_{fa}$ through $100/P_{fa}$ independent trials, and the Probability of Detection ($P_d$) through 1000
independent trials, respectively.

Specifically, we investigate the following aspects:
\begin{enumerate}
  \item the convergence rate of the cyclic procedure based upon the EM algorithm;
  \item the CFAR behavior of the {\color{black}EM-BML-D} with respect to the disturbance covariance matrix;
  \item the detection performance in the presence of perfectly matched signals under different configurations;
  \item the rejection capability of mismatched target echoes.
\end{enumerate}

\subsection{Initialization Issues and Simulation Setup}
\label{Initia}
In the next illustrative examples, we assume that the interference covariance matrix is
$\bM=\sigma^2\bI_N+\sigma_c^2\bM_c$, where $\sigma^2\bI_N$ is the thermal noise component
with $\sigma^2=1$ the noise power
and $\sigma_c^2>0$ is the clutter power set according to the Clutter-to-Noise-Ratio (CNR)
given by $\text{CNR} = 10\log(\sigma_c^2/\sigma^2)$. As for the clutter covariance matrix $\bM_c$,
we assume that its $(i,j)$th entry is equal to $\rho^{|i-j|}$ with $\rho=0.9$ the
one-lag correlation coefficient.
Moreover, we define the Signal-to-Clutter-plus-Noise-Ratio (SCNR) as
\begin{equation}\label{SCNR}
\text{SCNR} = |\alpha|^2\bv_t^\dag\bM^{-1}\bv_t,
\end{equation}
where $\bv_t$ is the true target steering vector that is equal to the nominal steering vector $\bv$ in a perfectly matched
scenario.

For initialization purpose,
we set $\widehat{p}_0^{(0)}=\widehat{p}_1^{(0)}=1/2$,
while $\widehat{\bM}_{(0)}$ is equal to $\bS$. As for the initial value of $\widehat{\alpha}^{(0)}$, a
reasonable choice is
\begin{equation}
\widehat{\alpha}^{(0)} = \frac{\bv^\dag\bS^{-1}\bz}{\bv^\dag\bS^{-1}\bv}.
\end{equation}
It is important to underline that with this initializations, $\frac{q_1^{(0)}(\bZ)}{q_0^{(0)}(\bZ)}$
can be written as
\begin{eqnarray}
\frac{q_1^{(0)}(\bZ)}{q_0^{(0)}(\bZ)}&=&\frac{\widehat{p}_0^{(0)}}{\widehat{p}_1^{(0)}}
\exp\left\{g(\widehat{\alpha}^{(0)},\bS;\bZ)\right\}\nonumber\\
&=&\frac{\left|\bv^\dag\bS^{-1}\bz\right|^2}{\bv^\dag\bS^{-1}\bv}
=\frac{\left|\boe_1^\dag\bar{\bar{\bS}}^{-1}\bar{\bar{\bz}}\right|^2}{\boe_1^\dag\bar{\bar{\bS}}^{-1}\boe_1},
\end{eqnarray}
where $\bar{\bar{\bS}} = \bU^\dag\bM^{-\frac{1}{2}}\bS\bM^{-\frac{1}{2}}\bU\sim\cC\cW(K,N;\bI_N)$ indicates
that $\frac{q_1^{(0)}(\bZ)}{q_0^{(0)}(\bZ)}$ is CFAR with respect to $\bM$.

\subsection{The Convergence of the Cyclic Procedure}

In this subsection, we investigate the convergence rate of the iterative estimation procedure.
To this end, let $\cL(\bZ;\btheta)$ be the objective
function in \eqref{theta}, then the {\color{black} relative variation of the likelihood function} due to the estimation of $\btheta$ at the
$l$th iteration is defined as
\begin{equation}
\Delta\cL(l)=\left|\frac{\cL(\bZ;\widehat{\btheta}^{(l)})-\cL(\bZ;\widehat{\btheta}^{(l-1)})}
{\cL(\bZ;\widehat{\btheta}^{(l)})}\right|.
\end{equation}

Fig. \ref{fig-IterationNumber-H0} shows $\Delta\cL(l)$ versus $l$ under $H_0$ averaged over 1000 independent
Monte Carlo trials for different configurations of $N$ and $K$.
As it is observed, the curves rapidly decrease as $l$ increases and
for high values of $N$ and $K$, the variation further decreases due to the increased estimation quality.
More importantly, for the considered cases under $H_0$,
the average relative variation is lower than $10^{-4}$ for $l\geq4$ and reduces to $10^{-5}$ at $l=6$.
Such a behavior are confirmed in Fig. \ref{fig-IterationNumber-H1} which corresponds to
$H_1$ and contains the curves of $\Delta\cL(l)$ for different SCNR values.
Specifically, the variation is lower than $10^{-5}$ when $l\geq 5$.
Finally, notice that for a given configuration,
the variation significantly reduces when the SCNR increases.

Summarizing,
as a compromise between the computational burden and the optimization performance,
we use $l_{\text{max}}=5$ in the ensuing simulations. For the sake of comparison, we also consider
$l_{\text{max}}=7$.

\begin{figure}[h]
  \centering
  \includegraphics[width=7.7cm]{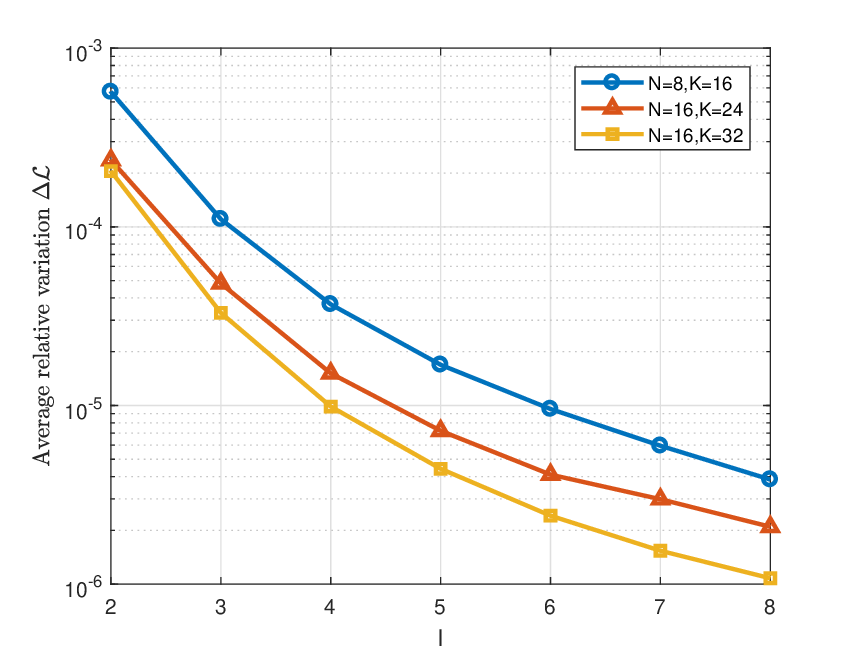}\\
  \caption{$\Delta\cL(l)$ versus iteration number under $H_0$, $\rho=0.9$, CNR$=30$ dB.}\label{fig-IterationNumber-H0}
\end{figure}

\begin{figure}[h]
  \centering
  \includegraphics[width=7.7cm]{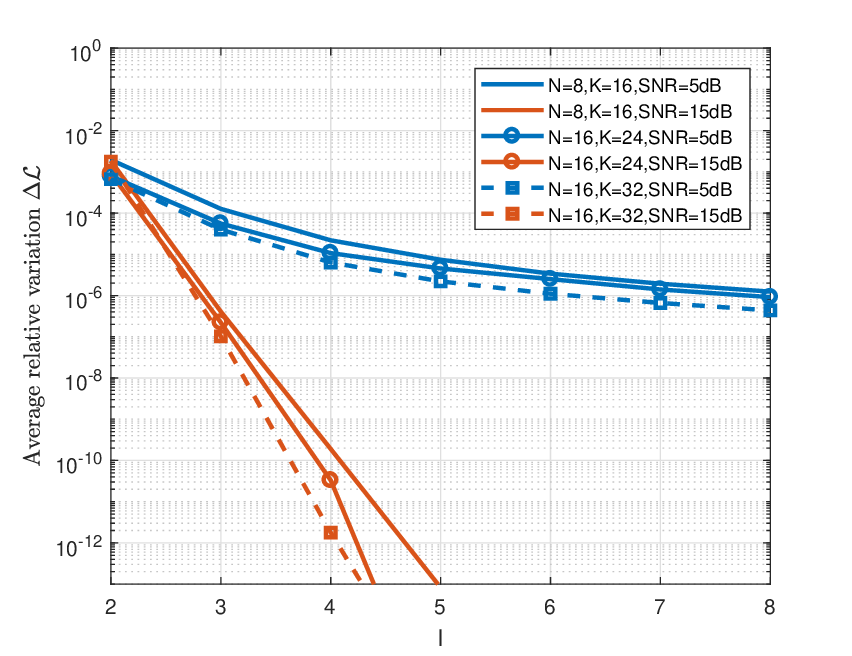}\\
  \caption{$\Delta\cL(l)$ versus iteration number under $H_1$, $\rho=0.9$, CNR$=30$ dB.}\label{fig-IterationNumber-H1}
\end{figure}
\subsection{The CFAR Properties}

Let us recall that in Section \ref{CFAR}, we proved the CFAR property of the {\color{black}EM-BML-D} with respect to
the statistical properties of the disturbance. In this subsection, we corroborate the results
presented in Section \ref{CFAR} by resorting to Monte Carlo simulation. For completeness,
we also consider the aforementioned competitors.
In, Fig. \ref{fig-CFAR-sigma} and Fig. \ref{fig-CFAR-rho}, we use thresholds corresponding to a nominal
$P_{fa}=10^{-4}$ with $N=8$, $K=16$, $\rho=0.9$, and CNR$=30$ dB to evaluate the actual $P_{fa}$
under different CNR and $\rho$ values. In the figures,
{\color{black}EM-BML-D}5 and {\color{black}EM-BML-D}7 stand for the {\color{black}EM-BML-D} with $l_{\text{max}}=5$ and $l_{\text{max}}=7$, respectively.
It can be seen from the figures that the {\color{black}EM-BML-D} is capable of maintaining the pre-assigned $P_{fa}$
when the CNR varies from
30 dB to 110 dB and $\rho$ changes from 0.5 to 0.9.

Such a corroboration of the CFAR property can also be observed under other configurations of
$N$ and $K$, and thus omitted here for brevity.

\begin{figure}[h]
  \centering
  \includegraphics[width=7.7cm]{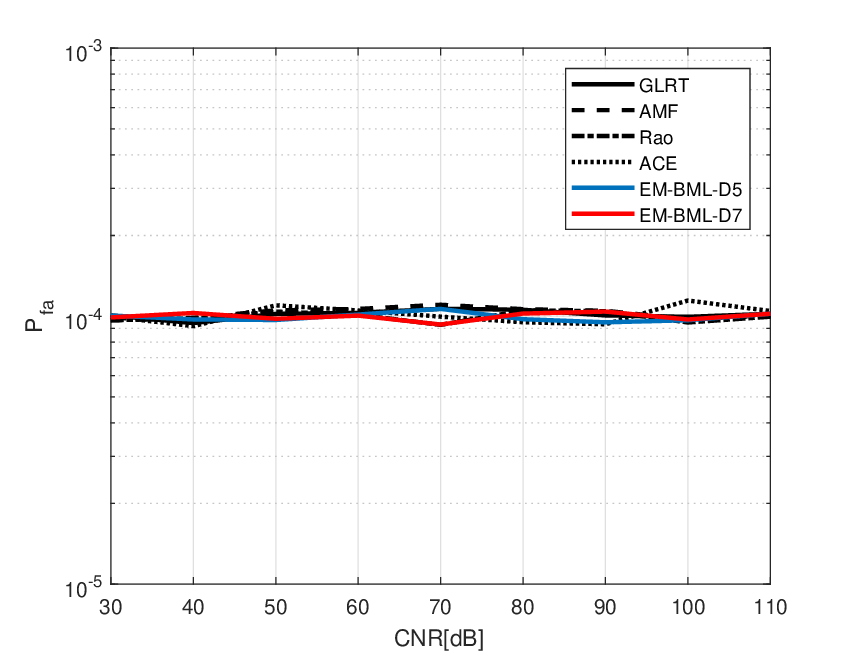}\\
  \caption{$P_{fa}$ versus CNR, assuming $N=8$, $K=16$, $\rho=0.9$, nominal CNR$=30$ dB.}\label{fig-CFAR-sigma}
\end{figure}

\begin{figure}[h]
  \centering
  \includegraphics[width=7.7cm]{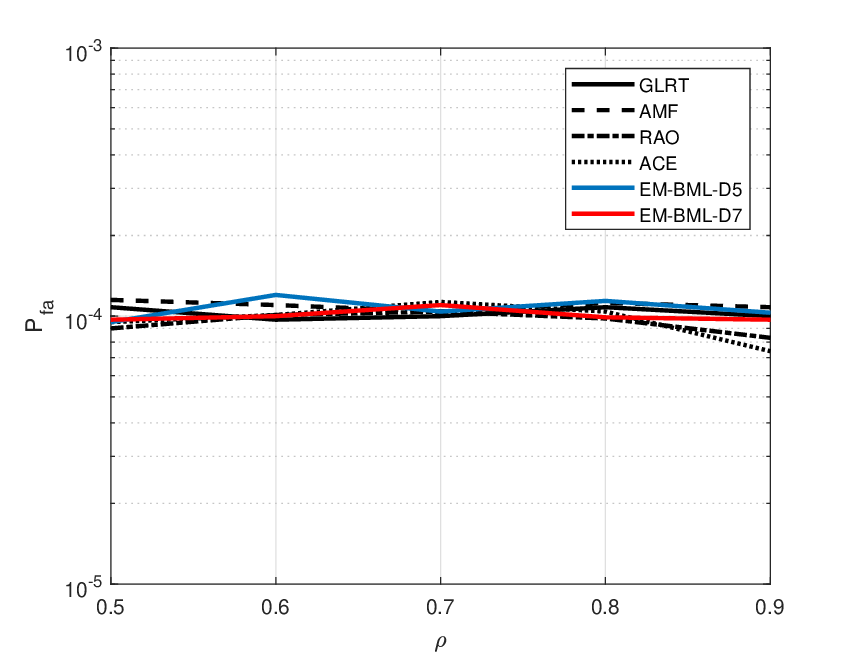}\\
  \caption{$P_{fa}$ versus $\rho$, assuming $N=8$, $K=16$, CNR$=30$ dB, nominal $\rho=0.9$.}\label{fig-CFAR-rho}
\end{figure}
\subsection{Matched Targets}

The detection performance analysis starts from the case where the nominal
steering vector $\bv$ and the actual steering vector $\bv_t$ are perfectly aligned
or at least the misalignment is negligible. Moreover, we initially set $K=2N$, which allows us to limit the detection loss related to the estimation of the interference parameters as claimed in \cite{RMB},
then we consider the so-called {\em sample-starved} scenario \cite{Hao2014Pers} where $N<K<2N$.

Fig. \ref{fig-Pd-N8K16} shows the $P_d$ curves assuming $N=8$, $K=16$, and highlights
that at least for the considered operating scenario, the proposed MAP-based detectors significantly outperform their conventional counterparts.
To be more specific, it can be seen that the {\color{black}EM-BML-D}5 and {\color{black}EM-BML-D}7 ensure the best performance with $P_d>0.8$ when
$\text{SCNR}=15$ dB, whereas the GLRT and AMF experience a $P_d$ around 0.6 and 0.5, respectively. The performance enhancement
is due to the fact that the information of both hypotheses has been combined through the latent variables to
come up with the model of the received data, as shown in \eqref{superposition}. Such a combination allows us
to account for the hypothesis priors in the decision and estimation process.
As a matter of fact, the estimates of the unknowns are linear combinations of $H_0$ and $H_1$, with {\em a posteriori} probabilities as the coefficients (see \eqref{Mestimate} and \eqref{alphaestimate}).
At the same time, the EM algorithm endows the detection system with the capability of estimating the posterior information from
data.
Notice also that in Fig. \ref{fig-Pd-N8K16}, the {\color{black}EM-BML-D} with $l_{\text{max}}=5$ and $l_{\text{max}}=7$ share
almost the same performance, which may be due to the sufficiently small residual error of the
cyclic procedure.
{\color{black}In Table \ref{table:time}, we report the execution time (averaged over 1000 Monte Carlo trials) for each of the considered detectors
to provide an estimate of the respective computational complexity. Inspection of Table \ref{table:time} points out that the EM-BML-D achieves better detection performance at
the price of an increased computational burden (about one order of magnitude with respect to the competitors) due to the iterative estimation procedures. As expected, the EM-BML-D7 is more time-consuming
than the EM-BML-D5. Similar results have been obtained under different parameter configurations and are not reported here for brevity.
\begin{table}[h]
\caption{Average Execution Times over 1000 Monte Carlo Trails}
\label{table:time}
\centering
\begin{tabular}{c|ccccc}
  \toprule
   Detector&GLRT & AMF & Rao \\
  \midrule
  Time(s)&$2.29\times10^{-5}$&$1.93\times10^{-5}$&$2.30\times10^{-5}$\\
  \bottomrule
  \toprule
  Detector&ACE & {\color{black}EM-BML-D5}&EM-BML-D7  \\
  \midrule
  Time(s)&$2.21\times10^{-5}$&$3.81\times10^{-4}$ &$5.37\times10^{-4}$\\
   \bottomrule
\end{tabular}
\end{table}
}

To further corroborate the results observed in Fig. \ref{fig-Pd-N8K16}, in Fig. \ref{fig-Pd-N16} we show
the detection performance of the detectors assuming a parameter setting equal to that of Fig. \ref{fig-Pd-N8K16}, except for
the values of $N$ and $K$. It can be seen that the {\color{black}EM-BML-D}5 shares the same performance as {\color{black}EM-BML-D}7 and ensures the highest performance
in both figures. Notice that when $N=16$, $K=32$, the improvement of the {\color{black}EM-BML-D} with respect to the counterparts decreases since the quality of covariance matrix estimates of the latter improves.
From Fig. \ref{fig-Pd-N16K24}, it turns out that even though $N$ and $K$
become larger compared to Fig. \ref{fig-Pd-N8K16}, the detection performance may deteriorate due to the decrease of the ratio $K/N$.
Finally, Fig. \ref{fig-Pd-N16Kvari} shows the $P_d$ curves of {\color{black}EM-BML-D}5 and GLRT when $N=16$ and $K$ as a parameter.
It can be observed that {\color{black}EM-BML-D}5 with $K=28$ outperforms the GLRT with $K=32$.
\begin{figure}[h]
  \centering
  \includegraphics[width=7.7cm]{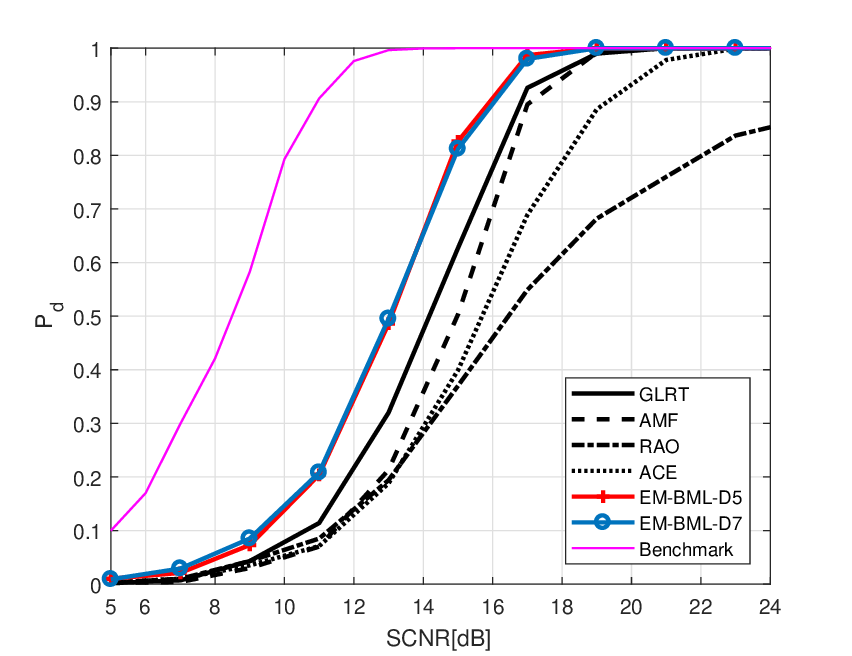}\\
  \caption{$P_d$ versus SCNR for matched targets, $N=8$, $K=16$, $\text{CNR}=30$ dB.}\label{fig-Pd-N8K16}
\end{figure}

\begin{figure}[h]
  \centering
  \subfigure[$N=16$, $K=32$]{\includegraphics[width=7.7cm]{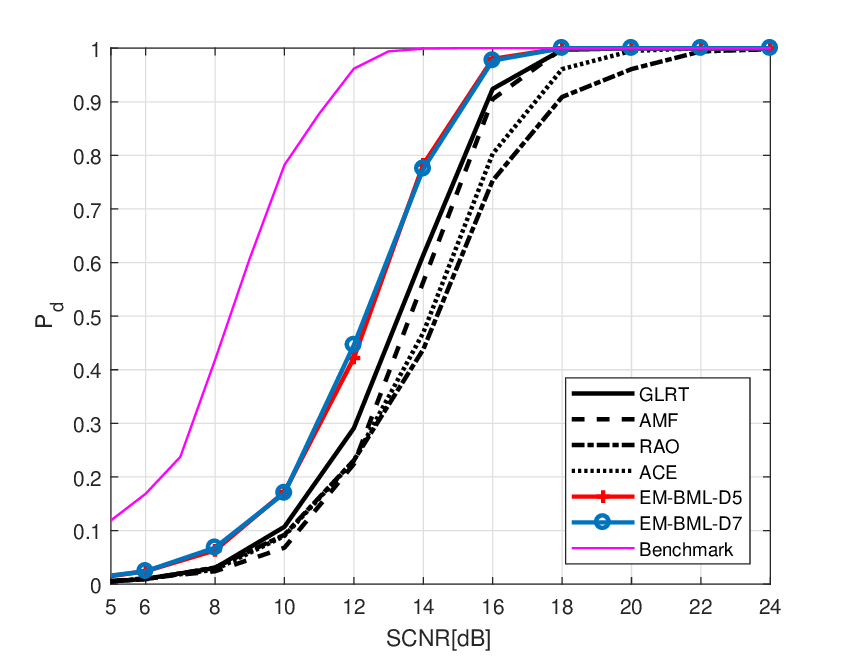}\label{fig-Pd-N16K32}}
  \subfigure[$N=16$, $K=24$]{\includegraphics[width=7.7cm]{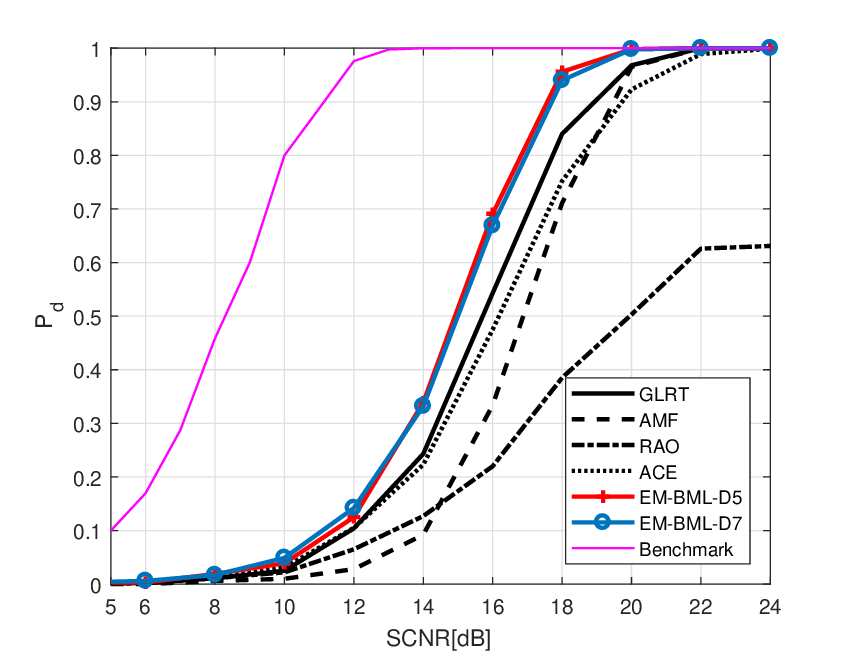}\label{fig-Pd-N16K24}}
  \caption{$P_d$ versus SCNR for matched targets, $\text{CNR}=30$ dB.}\label{fig-Pd-N16}
\end{figure}

\begin{figure}[h]
  \centering
  \includegraphics[width=7.7cm]{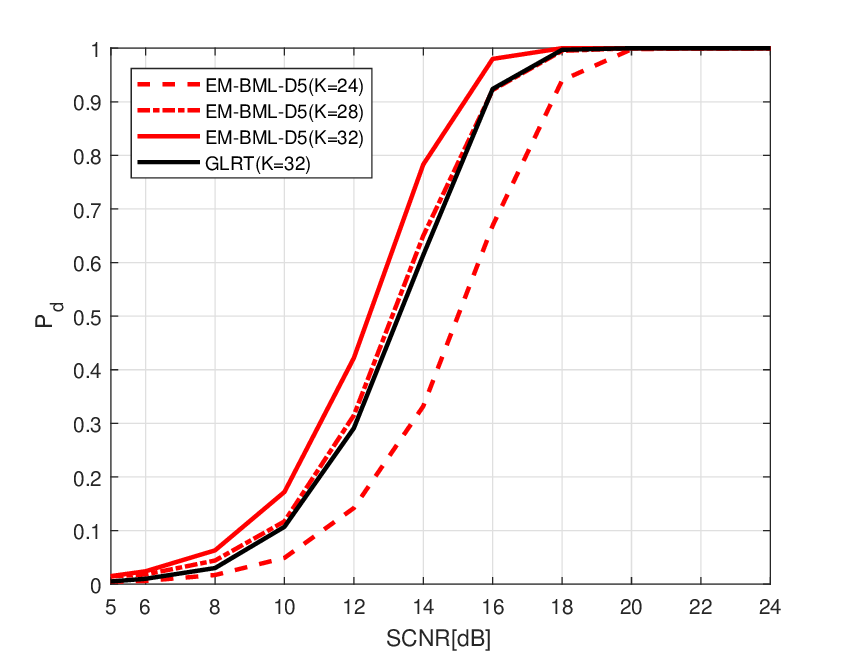}\\
  \caption{$P_d$ versus SCNR for matched targets, $N=16$, $\text{CNR}=30$ dB.}\label{fig-Pd-N16Kvari}
\end{figure}

To conclude this subsection, we assess the detection performance of the {\color{black}EM-BML-D5}
using the real radar data collected by the MIT Lincoln Laboratory Phase One radar in 1985.
The considered data set is {\em H067037.3} which contains 30 720 temporal returns and 76 range bins
of L-band land clutter data with HH polarization. More details about the Phase One radar
can be found in \cite{HaoSpringer} and \cite{Billingsley1999groundclutterdata}. Analyses in \cite{Pia2021Learning}
and \cite{HeteroEM2024anglo} show that such data does not match the underlying assumption upon which
the {\color{black}EM-BML-D} is derived, namely homogeneous environment,
{\color{black}
and thus allows for an evaluation of the capability of adapting to different radar operating scenarios.
Assuming $N=8$ and $K=16$,  the threshold is first computed by considering range bin 28 as the CUT, and $K/2$ bins
on each side of the CUT as secondary data, respectively, and we use 5 pulses of overlap in consecutive trials to
achieve a sufficient number of Monte Carlo trials, namely $100/P_{fa}$ with $P_{fa}=10^{-2}$. Before evaluating $P_d$,
we estimate the $P_{fa}$ of such a threshold on range bin 28 and 66, respectively, as shown in Table \ref{table:28} and \ref{table:49}.
It can be seen that the threshold obtained on range bin 28 can maintain the desired $P_{fa}$ also on bin 66 which is
actually located in an nonhomogeneous region (see \cite{Pia2021Learning}).
}
\begin{table}[h]
\caption{Estimated $P_{fa}$ for range bin 28 ($\times 10^{-2}$)}
\label{table:28}
\centering
\begin{tabular}{ccccc}
  \toprule
   GLRT & AMF & Rao & ACE & {\color{black}EM-BML-D5}  \\
  \midrule
   1.01 &1.00&0.99&0.99&1.00\\
  \bottomrule
\end{tabular}
\end{table}

\begin{table}[h]
\caption{Estimated $P_{fa}$ for range bin 66 ($\times 10^{-2}$)}
\label{table:49}
\centering
\begin{tabular}{ccccc}
  \toprule
   GLRT & AMF & Rao & ACE & {\color{black}EM-BML-D5}  \\
  \midrule
   1.19 &1.7&1.3&1.3&1.23\\
  \bottomrule
\end{tabular}
\end{table}

As for the targets, they are
injected into the CUT as done for synthetic data according to the SCNR defined in \eqref{SCNR}.
Fig. \ref{fig-Pd-realdata-28} reports $P_d$ of the detectors versus SCNR
for range bin 28. It turns out that the {\color{black}EM-BML-D5} maintains a superior performance with respect to
the counterparts confirming
what have been observed in the case of synthetic data.Finally, we plot the curves of $P_d$ versus SCNR using the same thresholds as in Fig. \ref{fig-Pd-realdata-28} with
range bin 66 as the primary data, bins 57-64 and 68-75 as secondary data, in Fig. \ref{fig-Pd-realdata-49}.
Joint inspection of Table \ref{table:49} and Fig. \ref{fig-Pd-realdata-49} indicates that
the {\color{black}EM-BML-D5} enjoys a good capability of maintaining the $P_{fa}$ constant in different regions
and can overcome the competitors in terms of $P_d$.

\begin{figure}[h]
  \centering
  \includegraphics[width=7.7cm]{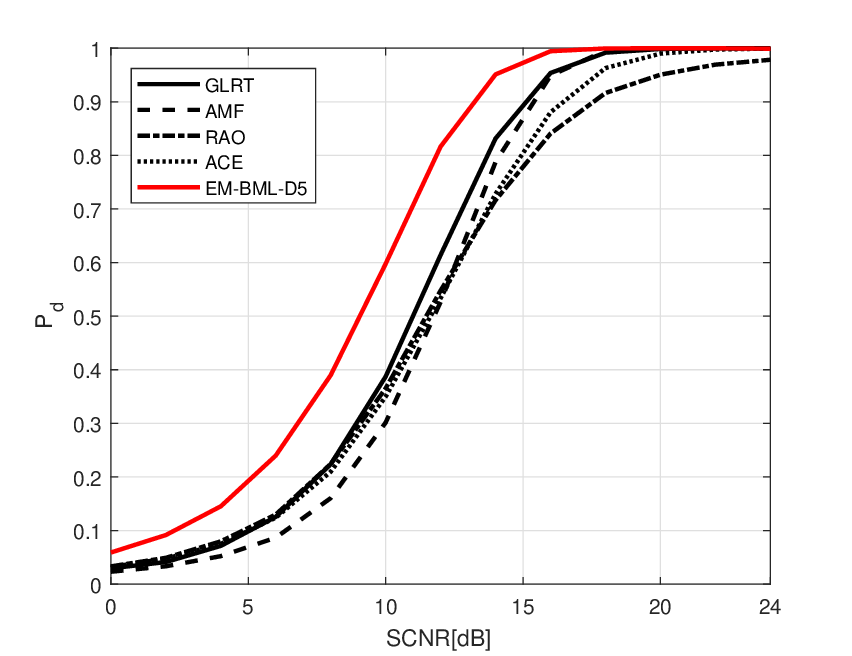}\\
  \caption{$P_d$ versus SCNR for real recorded data, range bin 28 used for both threshold estimation and $P_d$ evaluation, $l_{\text{max}}=5$.}\label{fig-Pd-realdata-28}
\end{figure}

\begin{figure}[h]
  \centering
  \includegraphics[width=7.7cm]{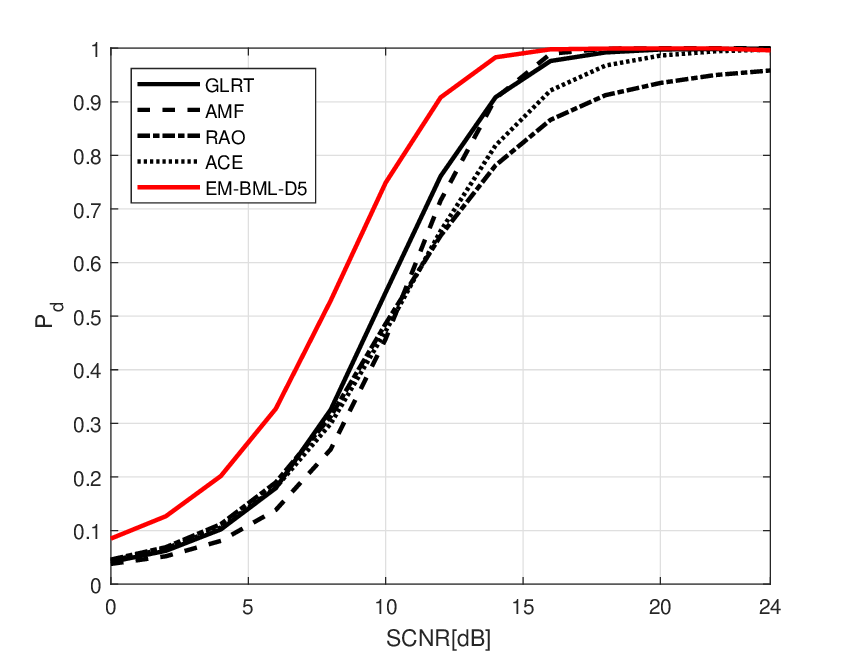}\\
  \caption{$P_d$ versus SCNR for real recorded data, range bin 28 used for threshold estimation, range bin 66 used for $P_d$ evaluation, $l_{\text{max}}=5$.}\label{fig-Pd-realdata-49}
\end{figure}
\subsection{Mismatched Targets}

The analysis presented in this subsection considers the case where
a mismatch between the true target steering vector $\bv_t$ and the nominal one exists.
This analysis allows us to understand if the proposed detector is selective or
robust \cite{Bandiera2009}. Let us recall here that a selective detector rejects slightly
mismatched signals with high probability whereas a robust detector is capable of
guaranteeing high detection probabilities in the presence of uncertainties
in the nominal steering vector.
To this end, we
evaluate the rejection capability of the detectors in terms of the mismatch angle $\phi$ defined as
\begin{equation}\label{costheta}
\cos^2\phi =\frac{\left|\bv^\dag\bM^{-1}\bv_t\right|^2}{(\bv^\dag\bM^{-1}\bv)(\bv_t^\dag\bM^{-1}\bv_t)}.
\end{equation}
Observe that in \eqref{costheta}, $\cos^2\phi = 1$ represents the perfectly matched case addressed in the previous
subsection, whereas $\cos^2\phi = 0$ corresponds to the totally mismatched scenario, namely $\bv_t$
is orthogonal to $\bv$ in the whitened observation space.
\begin{figure}[t]
  \centering
  \subfigure[$N=8$, $K=16$]{\includegraphics[width=7.7cm]{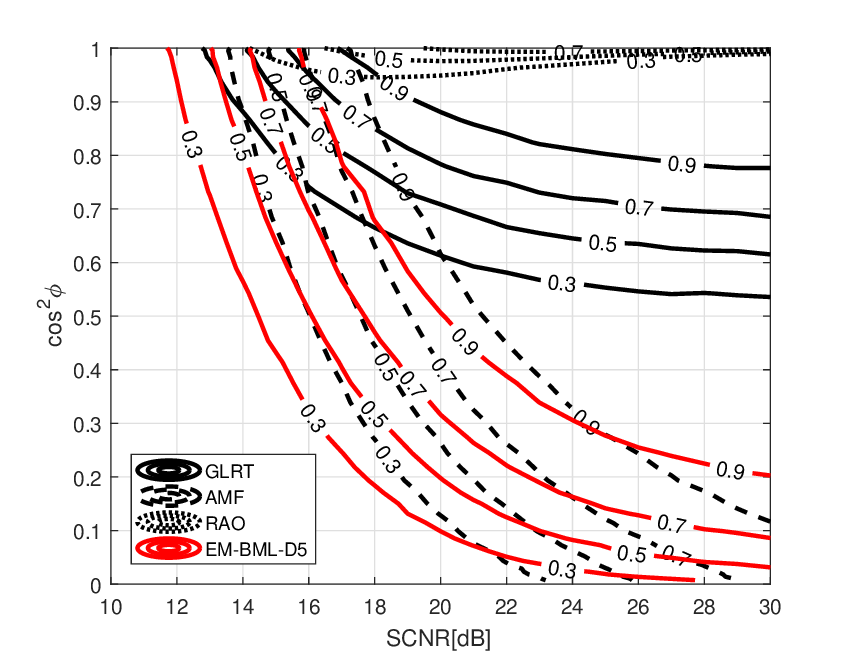}\label{fig-mis-N16K32}}
  \subfigure[$N=16$, $K=32$]{\includegraphics[width=7.7cm]{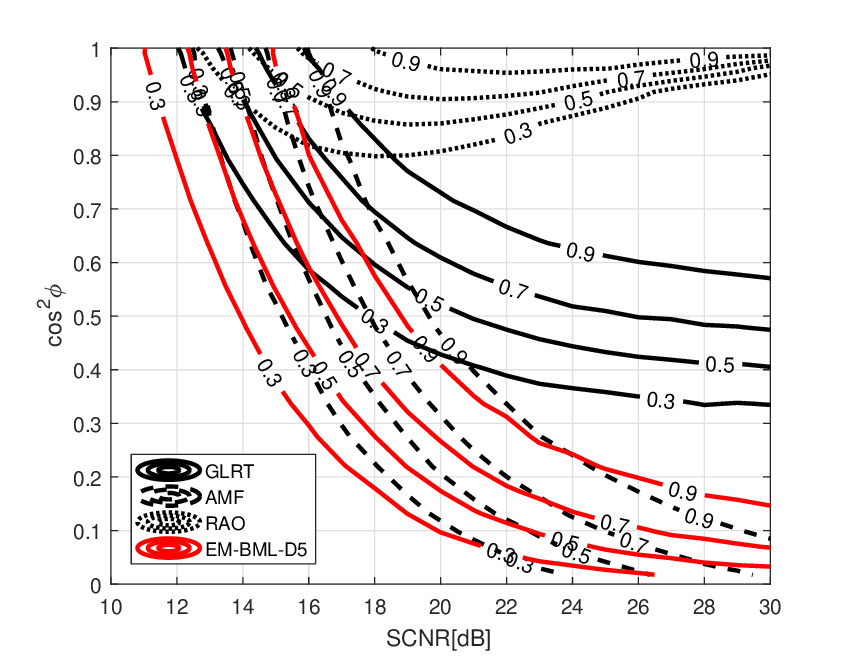}\label{fig-mis-N16K24}}
  \caption{Contours of constant $P_d$ for mismatched targets, $\text{CNR}=30$ dB.}\label{fig-mis}
\end{figure}

Fig. \ref{fig-mis} contains the contours of constant $P_d$ as a function of the $\cos^2\phi$ and SCNR.
Inspection of the figure shows that the {\color{black}EM-BML-D} proposed in this paper exhibits a similar performance
in terms of rejection capability of mismatched targets as the AMF.
In fact, in the low SCNR region,
the {\color{black}EM-BML-D} exhibits the lowest selectivity.
In the high SCNR region, the {\color{black}EM-BML-D} rejects mismatched signals slightly better than the AMF.
Otherwise stated, the {\color{black}EM-BML-D} is the most robust detector in the presence of target mismatch.
{\color{black}In Fig. \ref{fig-mis-cos}, we analyze the performance in the case of mismatched signals from another perspective
by plotting the curves of $P_d$ versus $\text{cos}^2\phi$ for a given value of SCNR such that all of the detectors
exhibit $P_d=1$ at $\text{cos}^2\phi=1$ (except for Rao test).
In fact, the $P_d$ of Rao test in Fig. \ref{fig-mis-N16K32-30} is 0 when $\text{cos}^2\phi<0.9$ and equals to 0.84 at $\text{cos}^2\phi=1$.
Notice that Fig. \ref{fig-mis-cos} is actually a slice of Fig. \ref{fig-mis} at SCNR $=25$ dB.
From the inspection of the figure, it turns out that the EM-BML-D has a robustness that is similar to
that of the AMF while is less selective than the GLRT and Rao.
}
\begin{figure}[h]
  \centering
  \subfigure[$N=8$, $K=16$, SCNR $=25$ dB]{\includegraphics[width=7.7cm]{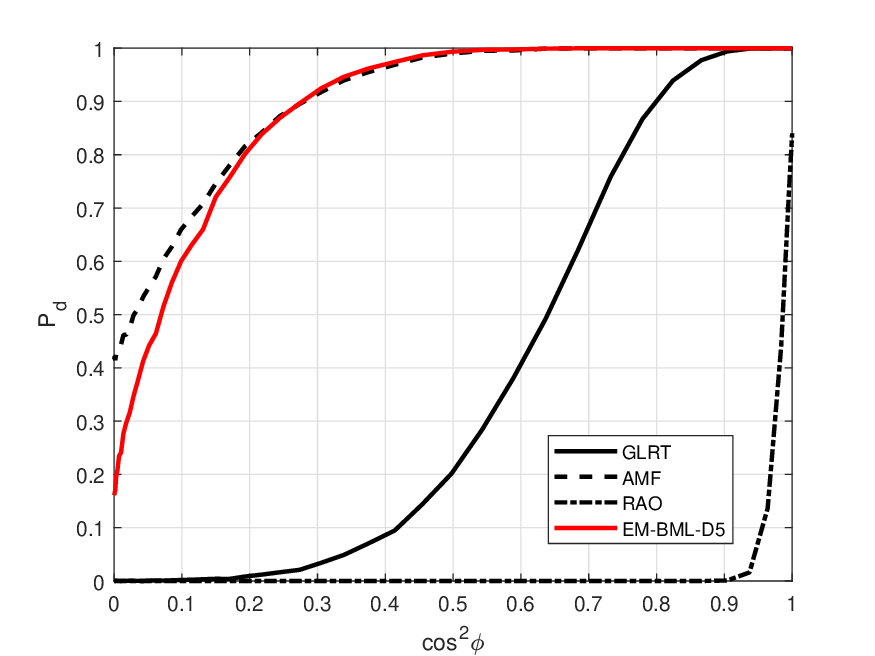}\label{fig-mis-N16K32-30}}
  \subfigure[$N=16$, $K=32$, SCNR $=25$ dB]{\includegraphics[width=7.7cm]{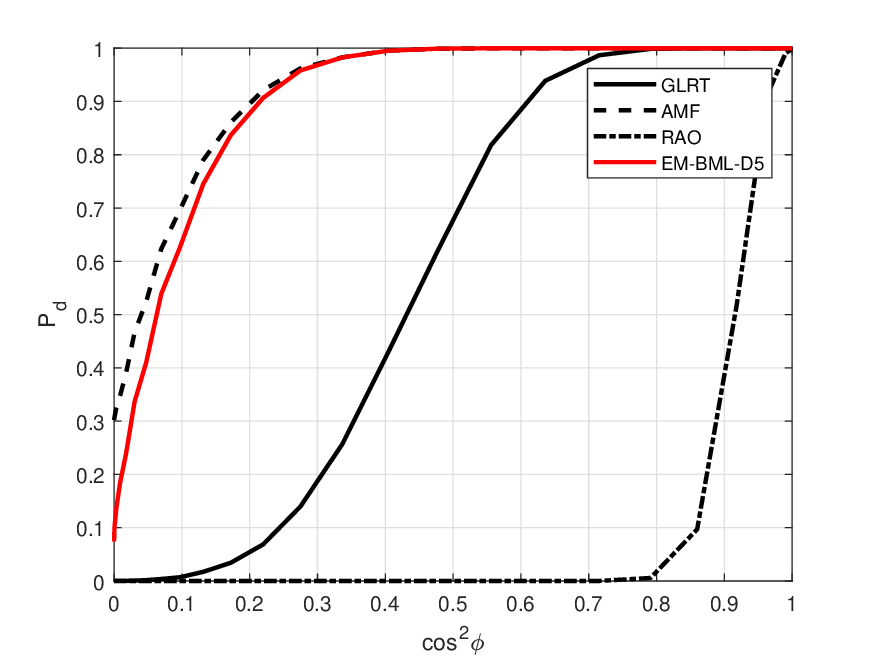}\label{fig-mis-N16K24-20}}
  \caption{$P_d$ versus $\text{cos}^2\phi$ for constant SCNR, $\text{CNR}=30$ dB.}\label{fig-mis-cos}
\end{figure}

\section{Conclusion}

In this paper, we have proposed a new radar target detection architecture grounded on the Bayesian framework and
by exploiting the estimates from the EM algorithm in the presence of latent random variables.
Precisely, the posterior information provided by the recursive procedure of the EM algorithm allows for the evaluation
of the responsibility that the hypotheses take in ``explaining'' the data, and thus the decision for the presence of a potential target.
At the design stage, we have modeled distribution of the collected data as a combination of the distributions under the two hypotheses,
and we have used the estimates provided by the EM algorithm to build up a Bayesian detector. More importantly, we have proved that the proposed
detector ensures the desired CFAR property with respect to the disturbance covariance matrix.
The illustrative examples have shown that the proposed {\color{black}EM-BML-D} significantly outperforms the
considered conventional counterparts in terms of
matched target detection and is robust to mismatched signals.

Future research tracks may encompass the design of selective schemes based on the EM algorithm
and the Bayesian framework. The generalization of {\color{black}EM-BML-D} towards multiple alternative hypothesis test also
represents an interesting direction for the next works.

\bibliographystyle{IEEEtran}
\bibliography{group_bib,group_bib_dan}

%
%

\end{document}